\documentclass[a4paper,11pt]{article}
\usepackage[margin=1in]{geometry}
\usepackage{amsmath,amssymb,amsthm,textcomp,stmaryrd}
\usepackage[inline]{enumitem}
\usepackage[binary-units]{siunitx}
\usepackage[sort&compress,square,sort,comma,numbers]{natbib}
\usepackage[mathscr]{euscript}
\usepackage[T1]{fontenc}
\usepackage[utf8]{inputenc}
\usepackage[final]{microtype}
\usepackage[hyphens]{url}
\usepackage[hidelinks]{hyperref}

\setlength{\bibsep}{0.0pt}

\newtheorem{theorem}{Theorem}[section]
\newtheorem{lemma}[theorem]{Lemma}
\newtheorem{proposition}[theorem]{Proposition}

\theoremstyle{definition}
\newtheorem{invariant}[theorem]{Invariant}
\newtheorem{definition}[theorem]{Definition}
\theoremstyle{remark}
\newtheorem{example}[theorem]{Example}

\newcommand{\Service}{\mathfrak{S}}
\newcommand{\Clients}{\mathfrak{C}}
\newcommand{\Replicas}{\mathfrak{R}}
\newcommand{\Faulty}{\mathscr{F}}

\newcommand{\Replica}[1][r]{\MakeUppercase{#1}}
\newcommand{\Client}[1][c]{\MakeLowercase{#1}}
\newcommand{\ID}[1]{\mathop{\textsf{id}}(#1)}
\newcommand{\NonFaulty}[1]{\mathop{\textsf{nf}}(#1)}
\newcommand{\n}{\textbf{n}}
\newcommand{\f}{\textbf{f}}

\newcommand{\m}{\textbf{m}}
\newcommand{\Instances}{\mathscr{I}}
\newcommand{\Instance}[1]{\mathcal{I}_{#1}}
\newcommand{\Primary}[1]{\mathcal{P}_{#1}}

\newcommand{\rn}{\rho}
\newcommand{\Request}{\texttt{CR}}
\newcommand{\Decs}[1]{\mathscr{D}_{#1}}
\newcommand{\Dec}[2]{\mathcal{D}_{#1,#2}}
\newcommand{\Succ}[1]{\textsf{S}(#1)}
\newcommand{\Fail}{\textsf{F}}
\newcommand{\SuccS}[1]{\textsf{S}(#1)}
\newcommand{\FailS}[1]{\textsf{F}(#1)}

\newcommand{\Hash}[1]{\mathop{\texttt{Hash}}(#1)}


\newcommand{\abs}[1]{\lvert #1 \rvert}
\newcommand{\union}{\cup}

\newcommand{\difference}{\setminus}
\newcommand{\Concat}{\oplus}
\newcommand{\subref}[2]{\ref{#1}\ref{#2}}

\renewcommand{\div}{\operatorname{div}}
\renewcommand{\bmod}{\operatorname{mod}}

\usepackage[noend]{algorithmic}

\newcommand{\GETS}{:=}
\newcommand{\Var}[1]{\textit{#1\/}}
\newenvironment{myprotocol}{
    \hrule
    \smallskip
    \scriptsize
    \algsetup{linenosize=\tiny}
    \begin{algorithmic}[1]

        \newcommand{\PROTOCOL}[2]{\item[] \textbf{\underline{##1}} ##2\\[2pt]}
        \makeatletter
            \newcommand{\EVENT}[1]{\STATE \textbf{event} ##1 \textbf{:}\begin{ALC@g}}
            \newcommand{\ENDEVENT}{\end{ALC@g}}
        \makeatother
}{
    \end{algorithmic}
    \smallskip
    \hrule
}

\newcommand{\pBFT}{\textsc{Pbft}}
\newcommand{\rBFT}{\textsc{Rbft}}
\newcommand{\BFT}{\textsc{bft}}

\usepackage{tikz}
\usetikzlibrary{arrows.meta,decorations.pathmorphing,decorations.pathreplacing}
\tikzset{
    >=Stealth,
    dot/.style={circle,scale=0.35,draw=black,fill=black},
    node_text/.append style={font=\strut\bfseries},
    label/.append style={font=\strut\footnotesize},
    decosnake/.append style={decoration={snake,pre length=4pt,post length=6pt,segment length=4,amplitude=.9},decorate}
}

\begin{document}

\title{Revisiting consensus protocols through wait-free parallelization\footnote{A brief announcement of this work will be presented at the 33rd International Symposium on Distributed Computing (DISC 2019)~\cite{mbftba}.}}
\author{Suyash Gupta \and Jelle Hellings \and Mohammad Sadoghi}
\date{\normalsize{\begin{tabular}{c}
                            Exploratory Systems Lab\\
                            Department of Computer Science\\
                            University of California, Davis\\
                            CA 95616-8562, USA
                \end{tabular}}}

\maketitle

\begin{abstract}
The recent surge of blockchain systems has renewed the interest in traditional Byzantine fault-tolerant consensus protocols. Many such consensus protocols have a \emph{primary-backup} design in which an assigned replica, \emph{the primary}, is responsible for coordinating the consensus protocol. Although the primary-backup design leads to relatively simple and high-performance consensus protocols, it places an \emph{unreasonable burden} on a good primary and allows malicious primaries to substantially \emph{affect} the system performance.

In this paper, we propose a protocol-agnostic approach to improve the design of primary-backup consensus protocols. At the core of our approach is a novel wait-free approach of running several instances of the underlying consensus protocol \emph{in parallel}. To yield a high-performance parallelized design, we present coordination-free techniques to order operations across parallel instances, deal with instance failures, and assign clients to specific instances. Consequently, the design we present is able to reduce the load on individual instances and primaries, while also reducing the adverse effects of any malicious replicas.
\end{abstract}

\section{Introduction}\label{sec:intro}

The introduction of \emph{Bitcoin}---the first wide-spread application driven by \emph{blockchains}---has resulted in a surge of interest in blockchain technology. This interest is backed by many use cases in the public and private sectors. 
For example, in trade~\cite{pwcenergy,christies,impactblock,pwcenergy,hypereal}, identity management~\cite{blockdev,hypereal,impactblock}, food production~\cite{wurblockfood}, aid delivery~\cite{blockdev,hypereal}, health care~\cite{blockhealthover,blockhealthfac,blockeu}, fraud prevention~\cite{promiseblock}, and GDPR compliance~\cite{ibmgdpr}. 
At the core of these use cases is the need to manage and replicate data, such as financial transactions, among a group of participants. Consequently, at the core of blockchain technology are \emph{consensus protocols} that allow replicating data across a group of servers (replicas), some of which can fail or can act maliciously.

Several use cases for blockchains operate in a \emph{permissioned environment} in which the participants can only join via well-established procedures.
These established procedures prevent malicious entities from controlling a majority of the replicas. 
In this permissioned environment, a blockchain can be maintained using traditional high-performance Byzantine fault-tolerant (\BFT{}) consensus protocols~\cite{pbft,pbftj,aardvark,rbft,zyzzyva,zyzzyvaj,bft700,byza,fastbft,linbft,sbft,blockplane,trans-book,clussendba}. Commonly, these protocols use the \emph{primary-backup model} pioneered in the Practical Byzantine Fault Tolerant consensus protocol~\cite{pbft}. 
In these \BFT{} protocols, a single replica is designated as \emph{the primary} and is responsible for coordinating the consensus decisions, 
while all the other replicas perform the \emph{backup role}.

The primary-backup model simplifies the development of consensus protocols substantially: when a primary is non-malicious, then even the simplest broadcast replication protocols suffice. The only complication in these consensus protocols is the way in which they deal with malicious primaries: malicious behavior must be either detected (after which the primary can be replaced) or prevented altogether. 
This simplicity of the primary-backup model negatively affects its performance in three ways~\cite{rbft,aardvark,prime,spin}:
\begin{enumerate}[wide]
\item \emph{Primary load}. The primary not only has to perform the primary tasks, but also the backup role (as it is itself a replica). Consequently, the primary receives a higher load than other replicas, 
and this load at the primary can become a \emph{bottleneck in the overall system throughput}. This is especially the case in fine-tuned high-performance consensus protocols that employ complex cryptographic primitive, for example, to reduce communication overheads or to improve resilience.
\item \emph{Primary replacement}. As stated earlier, primary-backup consensus protocols work only when the primary behaves in accordance with the protocol. If the primary acts malicious or is faulty, then it will be replaced. However, detection of such behaviors requires setting timers. Further, replacing a faulty primary usually takes a while. During this time the system is unable to handle requests, which {\em negatively affects its overall throughput}.
\item \emph{Malicious behavior}. Primary-backup consensus protocols rely on the \emph{underlying algorithm} to detect malicious behavior of the primary. Usually, these detectors are only capable of detecting catastrophic failures that prevent new consensus decisions altogether, but they fail to detect or deal with primaries that {\em affect the performance of the system in other ways}, for example, a malicious primary could reduce or throttle the throughput of the system.
\end{enumerate}

To the best of our knowledge, no approach is yet able to address all these limitations of primary-backup consensus protocols. In this work, we address these limitations in a \emph{protocol-agnostic} manner by exploiting parallelization. 
In our \emph{paradigm}, we run several instances of the underlying consensus protocol \emph{in parallel} and we balance the system load among these parallel instances. 
This parallelism helps to reduce the load per primary and mitigates the negative impacts of a single primary on the throughput of the system. 
Our design is fine-tuned such that the instances coordinated by non-faulty replicas are \emph{wait-free}: they can continuously make consensus decisions, independent of the behavior of any other instances. Our paradigm is highly flexible: it can be used in combination with any well-behaved  primary-backup consensus protocol and it can be fine-tuned towards various application-specific needs.

\paragraph{Organization.}
In Section~\ref{sec:prelim}, we introduce terminology and notations that we use throughout this paper. In Section~\ref{sec:stepwise}, we present how we envision parallelization of consensus protocols, and tackle the main design challenges. Next, in Section~\ref{sec:waitfree}, we refine the heavily step-wise coordinated approach of Section~\ref{sec:stepwise} by presenting a \emph{wait-free}  design that adds additional challenges but supports maintaining high throughput. Finally, in Section~\ref{sec:related}, we discuss related work and in Section~\ref{sec:conclude} we conclude on our findings and discuss avenues for further research.

\section{Preliminaries}\label{sec:prelim}

In this work, we present a protocol-agnostic paradigm to parallelize consensus protocols with the aim of increasing consensus throughput, while reducing the effects of individual malicious replicas. We now introduce the notations and assumptions used throughout this paper.

\paragraph{Service notation.}
We represent a replicated \emph{service} by a triple $\Service = (\Clients, \Replicas, \Faulty)$, where $\Clients$ is the set of clients using the service, $\Replicas$ is the set of \emph{replicas} and $\Faulty \subset \Replicas$ is the set of \emph{faulty replicas} that exhibit Byzantine behavior. We write $\n = \abs{\Replicas}$ and $\f = \abs{\Faulty}$ to denote the number of replicas and faulty replicas, respectively. We assign each replica $\Replica \in \Replicas$ a unique identifier $\ID{\Replica}$ with $0 \leq \ID{\Replica} < \n$. Similarly, we assign each client $\Client \in \Clients$ a unique identifier $\ID{\Client}$ with $0 \leq \ID{\Client} < \abs{\Clients}$. The set of \emph{non-faulty replicas}, denoted by $\NonFaulty{\Service}$, is defined as $\NonFaulty{\Service} = \Replicas \difference \Faulty$. We assume that the non-faulty replicas behave in accordance with the protocol and are deterministic: on identical inputs, we expect non-faulty replicas to produce identical outputs.

\paragraph{Consensus protocol.}
A \emph{consensus protocol} helps to replicate a sequence of \emph{values} among all the non-faulty replicas. A single execution of a correct consensus protocol satisfies the following two requirements~\cite{distalgo}:
\begin{enumerate}
    \item \emph{Termination}. Each non-faulty replica accepts a value.
    \item \emph{Non-divergence}. All non-faulty replicas accept the same value.
\end{enumerate}
In this paper, we consider those consensus protocols that replicate a 
sequence of client requests (for example, database operations). 
In this setting, the termination and non-divergence requirements imply \emph{data consistency}, a safety property. 
Additionally, termination of one \emph{round} (or one consensus) assures that all the non-faulty replicas 
have the same state.
This ensures that any preconditions for the next round are met, which implies \emph{availability}, a liveness property. 
Most general-purpose consensus protocols that do not expect synchronous communication to guarantee 
non-divergence require $\n > 3\f$, which we also assume throughout this paper~\cite{pbft,pbftj}.

The focus of this paper are the consensus protocols that follow the \emph{primary-backup model}. 
In such protocols, a single replica is assigned the role of the \emph{primary} and  is responsible for initiating and 
coordinating each round of the consensus protocols, while all the remaining replicas perform the \emph{backup} role.  
As the primary can be malicious, these protocols usually have the means to \emph{detect failure} of the primary and \emph{transfer control} to a new primary. A well-known example of a primary-backup consensus protocol is the  Practical Byzantine Fault Tolerance protocol (\pBFT{})~\cite{pbft,pbftj}, which has inspired the design of many modern consensus protocols~\cite{aardvark,zyzzyva,zyzzyvaj,bft700,byza,rbft,fastbft}.

\paragraph{Sequence and map notations.}
Let $S = [s_0, \dots, s_{k-1}]$ be a sequence. We write $S[i]$ to denote $s_i$ and $\abs{S}$ to denote the length $k$ of $S$. If $v$ is a value, then $S \Concat v = [s_0, \dots, s_{k-1}, v]$ denotes the \emph{concatenation} of $S$ and $v$. If $v$ is a value, then $S \difference v$ denotes the sequence obtained from $S$ by removing all occurrences of $v$. If $k$ and $v$ are values, then we write $k \mapsto v$ to denote a \emph{key-value} mapping that maps $k$ onto $v$.

\paragraph{Cryptographic primitives.}
We assume a \emph{collision-resistant hash function} that maps an arbitrary value $v$ to a numeric value  $\Hash{v}$ in a bounded range, called the \emph{digest}~\cite{hac}. We assume that it is practically impossible to find another value $v'$, $v \neq v'$, such that $\Hash{v} = \Hash{v'}$.

\section{Parallelizing consensus}\label{sec:stepwise}

We now present our paradigm to parallelize a consensus protocol. 
In our paradigm, each replica $\Replica \in \Replicas$ participates in $\m$, $1 \leq \m \leq \n$, instances 
of the underlying consensus protocol. 
We use $\Instances = \{ \Instance{1}, \dots, \Instance{\m} \}$ to denote these instances, 
where $1, \dots, \m$ acts as the \emph{instance identifier}. 
We write $\Instance{i}(\Replica)$ to denote the $i$-th instance, $1 \leq i \leq \m$, running on a replica $\Replica$ and
 we use $\Instances(\Replica) = \{ \Instance{1}(\Replica), \dots, \Instance{\m}(\Replica) \}$ to represent 
all the $\m$ instances running, in parallel, on the replica $\Replica$. 
Further, we represent the primary of an $i$-th instance as $\Primary{i}$, $1 \leq i \leq \m$. 
Our paradigm enforces that each primary exists on a distinct replicas, that is, for all $1 \leq i < j \leq \m$, we have $\Primary{i} \neq \Primary{j}$.

\begin{figure}[t]
    \centering
    \begin{tikzpicture}[scale=0.9,transform shape]
        \draw[draw=black,rounded corners=5pt,fill=black!10!blue!10] (0, 0) rectangle (5, 2);
        \draw[draw=black,rounded corners=5pt,fill=black!20!blue!20] (6, 0) rectangle (11, 2);
        \draw[draw=black,rounded corners=5pt,fill=black!30!blue!30] (12, 0) rectangle (17, 2);
        
        \node[align=center] at (2.5, 1) {Run a round of $\m$ \BFT{}\\ instances to accept $\m$ client\\ requests in parallel.};
        \node[align=center] at (8.5, 1) {Collect accepted requests\\and create an order among\\ the requests.};
        \node[align=center] at (14.5, 1) {Execute the requests in\\the created order};

        \node[above,font=\bfseries,align=center] at (2.5, 2) {Parallelized Consensus};
        \node[above,font=\bfseries,align=center] at (8.5, 2) {Ordering};
        \node[above,font=\bfseries,align=center] at (14.5, 2) {Execution};
        
        \path[thick] (5,1) edge[->] (6,1) (11, 1) edge[->] (12,1);
    \end{tikzpicture}
    \caption{A basic flow of tasks undertaken by a replica while employing our paradigm. Each replica is given the same \BFT{} protocol.}
    \label{fig:overview}
\end{figure}

Figure~\ref{fig:overview} presents a set of tasks undertaken by a replica employing our paradigm. Each replica takes as input a \BFT{} protocol and runs $\m$ instance of that protocol in parallel. Once these instances complete, the replica generates a global order of the requests across all these instances and executes these requests in the global order. For the sake of clarity, we first present a parallelized design in which the instances operate in a coordinated \emph{step-wise} manner. Furthermore, we assume that each instance operates a general consensus protocol: 

\begin{definition}\label{def:bbcon}
We model a \emph{consensus protocol} as a black-box that operates in well-defined \emph{rounds}. 
In each such round, a single \emph{consensus decision} is made by \emph{all} the non-faulty replicas. 
If a round \emph{succeeds}, then $\Succ{\Request}$ is the consensus decision observed by \emph{all} the non-faulty replicas, 
where $\Request$ is the client request accepted by all the non-faulty replicas in that round. 
If a round \emph{fails}, then $\Fail$ is the consensus decision observed by \emph{all} the non-faulty replicas, which indicates a primary failure. 
After a failure, a replica can be instructed to \emph{transfer control} to a new primary.  
If all the non-faulty replicas are instructed to transfer control to the same new primary, then 
this process will succeed and a new primary is elected.%
\footnote{Several practical \BFT{}-style consensus protocols do not strictly adhere to these assumptions. To improve throughput, these protocols provide \emph{partial consensus}, in which a majority of the non-faulty replicas are guaranteed to make successful consensus decisions. In these partial consensus protocols, consensus among all the non-faulty replicas is guaranteed only eventually through additional checkpoint and recovery steps.%
}
\end{definition}

The \emph{general} model of Definition~\ref{def:bbcon} allows us to focus on the core challenges in parallelizing consensus protocols. At the core of our paradigm is the coordination of $\m$ instances of a consensus protocol running \emph{in parallel}. This implies that a single round of our paradigm coordinates \emph{multiple parallel consensus rounds}, each of which is initiated and managed by a \emph{distinct} primary $\Primary{i}$ for the instance $\Instance{i}$, $1\leq i \leq \m$. Each consensus decision succeeds whenever $\Primary{i}$ is non-faulty. 
This approach to parallelization raises several important challenges:
\begin{enumerate}
    \item For optimal throughput, we need to ensure that each instance is making a distinct consensus decision, that is, 
each instance is processing a distinct client request.
    \item Every non-faulty replica should execute all the accepted client requests in \emph{the same order}.
    \item When several instances fail in a round and want to transfer control to new primaries, then all non-faulty replicas need to do so in the same manner.
\end{enumerate}
In our design, we address each of these challenges. Figure~\ref{fig:mbft_replica_overview} sketches a high-level overview of a \emph{parallelized consensus round} at replica $\Replica$.

\begin{figure*}
    \centering
    \begin{tikzpicture}[xscale=0.975,yscale=0.5]
        \node[dot] (sor) at (0.5, 0) {};
        \node[node_text,left] at (sor) {Start};

        \path (1,  4) edge (1,  -3)
              (3,  4) edge (3,  -3)
              (6,  4) edge (6,  -3)
              (14, 4) edge (14, -3);
        \node[label,above,align=center] at (2  , 2) {Parallel\\consensus};
        \node[label,above,align=center] at (4.5, 2) {Collect\\decisions};

        \node (i1) at (2,  1.5) {$\Instance{1}(\Replica)$};
        \node (i2) at (2,  0.5) {$\Instance{2}(\Replica)$};
        \node      at (2, -0.25) {$\vdots$};
        \node (im) at (2, -1.5) {$\Instance{\m}(\Replica)$};

        \path[->,semithick] (0.5, 0) edge (1.5, 1.5) edge (1.5, 0.5) edge (1.5, -1.5)
                  (2.5, 1.5) edge (3.25, 1.5) (2.5, 0.5) edge (3.25, 0.5) (2.5, -1.5) edge (3.25, -1.5)
                  (6.25, 1) edge  node[align=center,above=-5pt,label] {Success\\decisions} (8, 1)
                  (6.25, -1) edge node[align=center,above=-5pt,label] {Failure\\decisions} (8, -1)
                  (10, 1) edge (11, 1)
                  (10, -1) edge (11, -1)
                  (13, 1) edge (14.5, 0)
                  (13, -1) edge (14.5, 0);
                  
        \path[semithick] (6.25, 1.5) edge (6.25, -1.5) (5.75, 1.5) edge (6.25, 1.5) (5.75, 0.5) edge (6.25, 0.5) (5.75, -1.5) edge (6.25, -1.5);

        \node[align=center] (c1) at (4.5,  1.5) {$\Succ{\Request_1}$ or $\Fail$};
        \node[align=center] (c2) at (4.5,  0.5) {$\Succ{\Request_2}$ or $\Fail$};
        \node[align=center]      at (4.5, -0.25) {$\vdots$};
        \node[align=center] (cm) at (4.5, -1.5) {$\Succ{\Request_\m}$ or $\Fail$};

        \node[align=center] (permute) at (9,   1) {order\\requests};
        \node[align=center] (execute) at (12,  1) {execute\\requests};
        \node[align=center] (select) at  (9,  -1) {select new\\primaries};
        \node[align=center] (replace) at (12, -1) {replace\\primaries};

        \node[dot] (eor) at (14.5, 0) {};
        \node[node_text,right] at (eor) {End};

        \draw[decoration={brace,amplitude=5pt},decorate,thick] (8, 1.7) -- (13, 1.7);
        \node[label,above] at (10.5, 1.8) {Deterministic round execution};
        
        \draw[decoration={brace,amplitude=5pt,mirror},decorate,thick] (8, -1.7) -- (13, -1.7);
        \node[label,below] at (10.5, -1.8) {Unified primary replacement};
    \end{tikzpicture}
    \caption{A high-level overview of a replica $\Replica$. The replica coordinates a single consensus round among $\m$ instances of some \emph{consensus protocol}. Each instance yields a consensus decision. The success decisions yield a set of client requests, which are executed in a deterministic order. The failure decisions are collected and can be used to replace primaries in a unified manner.}\label{fig:mbft_replica_overview}
\end{figure*}

In each \emph{parallelized consensus round}, we first allow each of the $\m$ instances to independently reach a consensus decision. 
Next, we collect these decisions. The success decisions---of the form $\Succ{\Request}$---are executed in a \emph{deterministic fashion}. 
The failure decisions---of the form $\Fail$---are used to recover the instances involved in these decisions. 
To recover these instances, we replace their respective primaries, which we explain later in this section.

We use $\Dec{\rn}{i}$, $1 \leq i \leq \m$, to denote the consensus decision of instance $\Instance{i}$, 
agreed by all the non-faulty replicas in round $\rn$. Similarly, we use $\Primary{i,\rn}$ to indicate the primary of the $i$-th instance in round $\rn$. We write $\Decs{\rn} = \{ \Dec{\rn}{1}, \dots, \Dec{\rn}{\m} \}$ to represent the set of $\m$ consensus decisions agreed upon by all the non-faulty replicas. Finally, we write 
\[
    \SuccS{\Decs{\rn}} = \{ i \mapsto \Succ{\Request} \mid \Dec{\rn}{i} = \Succ{\Request}, 1 \leq i \leq \m \};\qquad
    \FailS{\Decs{\rn}} = \{ i \mapsto \Fail \mid \Dec{\rn}{i} = \Fail, 1 \leq i \leq \m \},
\]
to denote the partitioning of $\Decs{\rn}$ into sets of success decisions and failure decisions.

In Section~\ref{ss:det_rex}, we describe how to determine the order of execution of the client requests in $\SuccS{\Decs{\rn}}$ and in Section~\ref{ss:primfail}, we describe how to deal with primary failure in a coordinated manner in response to the failure decisions in $\FailS{\Decs{\rn}}$.  We discuss the assignment of clients to instances in the following section as part of the efforts to optimize parallelization benefits by removing the need for round-based step-wise operations.

\subsection{Deterministic round execution}\label{ss:det_rex}

The correctness of the underlying consensus protocol, used by instances $\Instance{1}, \dots, \Instance{\m}$, guarantees that each non-faulty replica derives the same set of client requests $\SuccS{\Decs{\rn}}$ in round $\rn$. Hence, non-faulty replicas only need to determine the order of execution of these client requests.

A simple solution would be to order the client requests based on their instance identifiers: first execute the client request of $\Instance{1}$ (if any), then execute the client request of $\Instance{2}$ (if any), and so on until all the requests are executed. Although this approach \emph{guarantees a unique sequential order} among all the executed client requests across all the non-faulty replicas, the approach also gives earlier instances disproportional control over execution. We illustrate this next:
\begin{example}
Consider a financial service in which client requests are of the form 
\[
    \operatorname{transfer}(A, B, n, m) \GETS \texttt{if $\operatorname{amount}(A) > n$ then $\operatorname{withdraw}(A, m)$; $\operatorname{deposit}(B, m)$}.
\]
Let $\Request_1 = \operatorname{transfer}(\text{Alice}, \text{Bob}, 500, 200)$ and $\Request_2 = \operatorname{transfer}(\text{Bob}, \text{Eve}, 400, 300)$ be client requests. Execution of $\Request_1$  influences the outcome of execution of $\Request_2$: if $200 \leq \operatorname{amount}(\text{Bob}) < 400$, then execution of $\Request_1$ before $\Request_2$ will result in a transfer of $300$ to $\text{Eve}$. If $\Request_1$ is executed after $\Request_2$, then $\text{Eve}$ will not receive anything. 
Hence, by choosing a predictable order of execution, earlier instances in the ordering can influence the execution of any requests accepted by later instances. 
\end{example}  

To resolve the illustrated shortcoming, we propose a method to deterministically select a different permutation of the order of execution in every round. Note that for any sequence $S$ of $k = \abs{S}$ values, there exist $k!$ distinct permutations. We write $P(S)$ to denote the set of permutations of $S$. As $\abs{P(S)} = k!$, there exists a bijection $f_S : \{ 0, \dots, k!-1 \} \rightarrow P(S)$. 
Next, we define the function $f_S$ recursively. We have the following:
\[
f_S(i) = 
\begin{cases}
    S &\text{if $\abs{S} = 1$};\\
   f_{S \difference S[q]}(r) \Concat S[q]   &\text{if $\abs{S} > 1$},
\end{cases}
\]
in which $q = i \div (\abs{S}-1)!$ is the quotient and $r = i \bmod (\abs{S}-1)!$ is the remainder of integer division by $(\abs{S}-1)!$.
\begin{lemma}\label{lem:bijection}
Function $f_S$ is a bijection from $\{ 0, \dots, \abs{S}! - 1 \}$ to all possible permutations of $S$.
\end{lemma}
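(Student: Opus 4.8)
The plan is to argue by induction on $k = \abs{S}$, treating the values of $S$ as pairwise distinct; this is what underlies the count $\abs{P(S)} = k!$ and also guarantees that $S \difference S[q]$---which deletes \emph{all} occurrences of $S[q]$---removes exactly one element and hence has length $k-1$, so that the recursion is well-formed. For the base case $k = 1$, the domain $\{0, \dots, 1! - 1\}$ equals $\{0\}$, there is a unique permutation of $S$, and $f_S(0) = S$ by definition; so $f_S$ is a bijection trivially.

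For the inductive step, I assume the statement for every sequence of length $k - 1$ and first settle well-definedness together with the ranges of $q$ and $r$. For $0 \le i < k!$ we have $0 \le r = i \bmod (k-1)! < (k-1)!$, and since $i < k! = k \cdot (k-1)!$ we obtain $0 \le q = i \div (k-1)! < k$. Thus $S[q]$ is a genuine element of $S$, the sequence $S \difference S[q]$ has length $k-1$, and by the induction hypothesis $f_{S \difference S[q]}(r)$ is a permutation of $S \difference S[q]$; appending $S[q]$ therefore yields a permutation of $S$, so $f_S$ maps $\{0, \dots, k!-1\}$ into $P(S)$. I would also record the standard fact that $i \mapsto (q, r)$ is a bijection from $\{0, \dots, k!-1\}$ onto $\{0, \dots, k-1\} \times \{0, \dots, (k-1)!-1\}$, since the two sets have equal cardinality $k \cdot (k-1)! = k!$ and division with remainder inverts it.

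The core of the argument is injectivity, from which bijectivity follows by counting, as both the domain and $P(S)$ have exactly $k!$ elements. Suppose $f_S(i) = f_S(j)$ and let $(q_i, r_i)$ and $(q_j, r_j)$ be the associated quotient--remainder pairs. The final entry of $f_S(i)$ is $S[q_i]$ and that of $f_S(j)$ is $S[q_j]$; equality of the permutations forces $S[q_i] = S[q_j]$, and distinctness of the values of $S$ then gives $q_i = q_j =: q$. Removing this common last entry leaves $f_{S \difference S[q]}(r_i) = f_{S \difference S[q]}(r_j)$, whence the induction hypothesis (injectivity of $f_{S \difference S[q]}$) yields $r_i = r_j$. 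Since $i \mapsto (q, r)$ is a bijection, $i = j$.

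I expect the only genuine obstacle to be the bookkeeping around repeated values: both the step ``the last element determines $q$'' and the identity $\abs{P(S)} = k!$ fail if $S$ may contain duplicates, precisely because $S \difference S[q]$ strips every copy of that value. The clean fix is to make the distinct-values hypothesis explicit (or to read $S$ positionally, counting permutations with multiplicity); with that in hand, every remaining step is routine.
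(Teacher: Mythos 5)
Your proof is correct and follows essentially the same route as the paper's: induction on $\abs{S}$, using the quotient--remainder decomposition $i \mapsto (q,r)$ to reduce to a bijection on sequences of length $\abs{S}-1$. You are somewhat more careful than the paper---spelling out injectivity plus a counting argument, and flagging that the argument (and the identity $\abs{P(S)} = \abs{S}!$, and the fact that $S \difference S[q]$ drops exactly one element) requires the entries of $S$ to be pairwise distinct, an assumption the paper leaves implicit.
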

\begin{proof}
The proof is by induction on the size of $S$. The base case is $\abs{S} = 1$, in which case $f_S = \{ 0 \mapsto S \}$, a bijection. As the induction hypothesis, we assume that $f_{S'}$ is a bijection for all $S'$ with $\abs{S'} \leq j$. Next, consider the case $f_{S}$ with $\abs{S} = j + 1$. Observe that, for each $s \in S$, there exist $j!$ permutations of $S$ that end with $s$. The computation of $q = i \div (\abs{S}-1)! = i \div j!$ and $r = i \bmod (\abs{S}-1)! = i \bmod j!$ divides all possible values $i$ into $(j+1)$ ranges of $j!$ values each. Hence, each $s \in S$ is chosen via $j!$ different values of $i$, and we have $j!$ different possible values for $r$ for each choice of $s$. The function $f_S$ chooses $s = S[q]$. By the induction hypothesis, the function $f_{S \difference s}(r)$ is a bijection. Thus, we conclude that the function $f_S$ is also a bijection.
\end{proof}

The result of $f_S(\rn \bmod \abs{S}!)$, on a sequence $S$ of all the consensus decisions of round $\rn$, can be used by all non-faulty replicas as a deterministic order of execution. This approach ensures that each instance receives equal opportunity to propose the client request to be executed first, across all the replicas. However, this approach is highly predictable.  Hence, as a further improvement, we can use the value $h = \Hash{R}$, with $R$ the set of client requests accepted in round $\rn$, instead of the round number $\rn$ to determine the order of execution. Assuming at least one primary is non-malicious ($\m > \f$), this value $h$ is only known after completion of the round $\rn$ as the malicious primaries cannot effectively collude to obtain a certain order of execution. Figure~\ref{fig:execute} presents the pseudo-code for the execution protocol.

\begin{figure}[t!]
    \begin{minipage}{0.6\textwidth}
        \begin{myprotocol}
                \PROTOCOL{Replica}{$\Replica \in \NonFaulty{\Service}$}
                \EVENT{collected $\SuccS{\Decs{\rn}}$ in round $\rn$}
                    \STATE Let $S \GETS [ \Request \mid (i \mapsto \Succ{\Request}) \in \SuccS{\Decs{\rn}} ]$, ordered on identifier $i$.
                    \FOR{$\Request \in f_{S}(\Hash{S} \bmod \abs{\SuccS{\Decs{\rn}}}!)$ with $\Request$ requested by $\Client \in \Clients$}
                        \STATE Execute $\Request$, yielding a resulting value $v$.
                        \STATE Send $v$ to $\Client$.
                    \ENDFOR
                \ENDEVENT
        \end{myprotocol}
    \end{minipage}
    \hfill
    \begin{minipage}{0.37\textwidth}
        \caption{The execution protocol running at each non-faulty replica of service $\Service = (\Clients, \Replicas, \Faulty)$.}\label{fig:execute}
    \end{minipage}
\end{figure}

\begin{proposition}\label{prop:execute}
The execution protocol of Figure~\ref{fig:execute} guarantees that every non-faulty replica executes client request in the same order. If $\m > \f$, then malicious replicas cannot control the order of execution.  If $\n > 2\f$, then clients will always receive at least $\f + 1$ identical results (from at least $\f+1$ non-faulty replicas) and can reliably detect successful execution of their request.
\end{proposition}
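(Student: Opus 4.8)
The plan is to verify the three claims of Proposition~\ref{prop:execute} separately, since each addresses a distinct property of the execution protocol.

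First I would establish that every non-faulty replica executes client requests in the same order. The correctness of the underlying consensus protocol (via the non-divergence requirement in Definition~\ref{def:bbcon}) guarantees that all non-faulty replicas agree on the same set $\SuccS{\Decs{\rn}}$ in round $\rn$. Given this agreement, I would argue that the protocol of Figure~\ref{fig:execute} proceeds deterministically: the sequence $S$ is defined by ordering the accepted requests on their instance identifiers $i$, which is canonical and depends only on $\SuccS{\Decs{\rn}}$. The hash $\Hash{S}$ is then a deterministic function of $S$, and by Lemma~\ref{lem:bijection}, $f_S$ is a well-defined bijection, so $f_S(\Hash{S} \bmod \abs{\SuccS{\Decs{\rn}}}!)$ yields a single, uniquely determined permutation. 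Since non-faulty replicas are assumed deterministic (from the service notation assumptions), identical inputs produce identical execution orders.

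Next I would address the claim that malicious replicas cannot control the order of execution when $\m > \f$. The key observation is that the permutation depends on $\Hash{S}$, where $S$ is built from the accepted requests across all instances. When $\m > \f$, at least one instance has a non-faulty primary, so at least one component of $S$ is a request the malicious replicas cannot predict or fix in advance of the round completing. I would invoke the collision-resistance of the hash function: to force a particular permutation, the adversary would need to control $\Hash{S}$, but this requires knowing or manipulating the full $S$, including the contribution of the honest instance, which is only determined once the round's consensus decisions are finalized. Hence the adversary cannot effectively collude to fix the output of $f_S$ ahead of time.

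Finally I would handle the result-detection claim under $\n > 2\f$. Here I would note that all $\NonFaulty{\Service} = \Replicas \difference \Faulty$ non-faulty replicas execute the same requests in the same order (by the first part) and, being deterministic, produce identical results $v$ for each request; each then sends $v$ to the requesting client. Since $\abs{\NonFaulty{\Service}} = \n - \f > 2\f - \f = \f$, there are at least $\f + 1$ non-faulty replicas, so the client receives at least $\f + 1$ identical responses from non-faulty replicas. As at most $\f$ replicas are faulty and could send spurious results, any value received $\f + 1$ or more times must include at least one honest contribution, and in fact the honest responses alone form a block of $\f + 1$ matching results, letting the client reliably identify the correct outcome.

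The main obstacle I anticipate is making the second claim rigorous rather than merely intuitive: the informal statement is that malicious primaries ``cannot effectively collude,'' but a fully careful argument must pin down the adversary model---what the malicious replicas observe and when---and argue that the honest primary's request, combined with collision-resistance, genuinely deprives the adversary of control over $\Hash{S}$. The other two parts reduce to straightforward consequences of determinism, the non-divergence property, and a counting argument on $\abs{\NonFaulty{\Service}}$.
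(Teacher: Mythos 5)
The paper states this proposition without a formal proof, relying only on the informal discussion preceding it (determinism of $f_S$ via Lemma~\ref{lem:bijection}, the remark that with $\m > \f$ the hash $h = \Hash{S}$ is unpredictable to colluding malicious primaries, and the standard $\f+1$ matching-replies argument). Your write-up correctly develops exactly that line of reasoning, and your closing caveat about the second claim is apt: the paper itself never formalizes the adversary model, so ``cannot control the order'' remains at the same informal level there as in your sketch.
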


\subsection{Dealing with primary failure}\label{ss:primfail}

The easiest way to deal with primary failure is by shutting down the instance coordinated by that primary. 
This approach would work well for some time, as in most practical settings the set of faulty replicas is relatively stable.  However, for high availability, we need to consider a more dynamic setting in which we only know that at most $\f$ replicas are faulty in a specific window of time (as faulty replicas can recover and non-faulty replicas can become faulty). Hence, we propose two targeted methods to deal with primary failure.

\subsubsection{Discarding primaries and in-place recovery}\label{sss:pf_ignore}

An easy way to deal with faulty primaries is by permitting a certain delay for the failing instance to \emph{recover}, after which all the non-faulty replicas are instructed to transfer control back to the previously-failed primary.  This requires all the non-faulty replicas to agree on a delay and this delay needs to provide the faulty primary sufficient time to recover.  If the delay is too short, then it would result in repeated primary failure, which in turn would lead to multiple failed attempts to transfer control back to the primary, an unnecessary cost for all the replicas. To determine the right delay, the replicas can start with a small value (in number of rounds) and double this value after each failure. This approach does not necessitate any coordination between the replicas or between the instances.  Further, this approach even works when $\m > \n -\f$, in which case some primaries might always be faulty while no non-faulty replicas are available to replace them. 

\subsubsection{Unified primary replacement}\label{sss:pf_uni_prpl}

A second way to deal with faulty primaries is by replacement. Indeed, when $\m \leq \n - \f$ our paradigm can aim for a stable set of $\m$ non-faulty primaries by replacing a failed primary by another available replica. 
This approach is akin to the one taken by traditional \BFT{}-style consensus protocols. In such protocols, the failure of the primary $\Replica[p]$ is detected whenever the behavior of $\Replica[p]$ prevents successful consensus decisions. After detecting the failure of $\Replica[p]$, all the non-faulty replicas switch to the \emph{next primary}. This next primary is deterministically selected by choosing the replica following $\Replica[p]$, that is, by choosing the replica $\Replica$ with $\ID{\Replica} = \ID{\Replica[p]} + 1$.\footnote{Recently, a few consensus protocols proposed choosing primaries uniformly at random using a distributed random coin~\cite{rand}. Such an approach replaces the malicious primaries with a non-faulty replica with high probability. Our parallelization paradigm can easily be extended to use this approach.} 
However, such a selection strategy may not work with our paradigm as we require all the $\m$ instances to have distinct primaries. 
\begin{example}
Consider instances $\Instance{1}$ and $\Instance{2}$ with primaries $\Replica = \Primary{1}$ and $\Replica[q] = \Primary{2}$ with $\ID{\Replica} = 1$ and $\ID{\Replica[q]} = 2$. Consider the following two consensus decisions made during round $\rn$:
\begin{enumerate}
    \item $\Decs{\rn} = [\Fail, \Succ{\Request}]$. Here, instance $\Instance{1}$ needs to replace its primary. If $\Instance{1}$ chooses the replica following $\Replica$, replica $\Replica[q]$, then both instances end up with the same primary. 
    \item $\Decs{\rn} = [\Fail, \Fail]$. Here, both the instances need to replace their primaries.  If $\Instance{1}$ chooses the replica following $\Replica$, replica $\Replica[q]$, then it will end up with a known faulty primary. If $\Instance{1}$ realizes that $\Replica[q]$  was already in use by $\Instance{2}$,  then $\Instance{1}$ would choose the replica following $\Replica[q]$. Unfortunately, in such a case, $\Instance{2}$ would do the same and both instances  would end up with the same primary.
\end{enumerate}
These cases illustrate that primary replacement would fail without coordination between the instances. Hence, this coordination is an essential task in our parallelization  paradigm.
\end{example}

\begin{figure}[t!]
    \begin{minipage}{0.6\textwidth}
        \begin{myprotocol}
            \PROTOCOL{Replica}{$\Replica \in \NonFaulty{\Service}$}
            \STATE $\Var{failed} \GETS \emptyset$.
            \STATE $\Var{primary} \GETS \{ (i \mapsto \Replica_i) \mid 0 \leq i < \m, \ID{\Replica_i} = i \}$.\label{fig:uni_prot:init}
            \EVENT{collected $\FailS{\Decs{\rn}}$ in round $\rn$}
                \STATE $\Var{failed} \GETS \Var{failed} \union \{ \Primary{i} \mid (i \mapsto \Fail) \in \FailS{\Decs{\rn}} \}$.\label{fig:uni_prot:failed}
                \FOR{$1 \leq i \leq \m$ with $(i \mapsto \Fail) \in \FailS{\Decs{\rn}}$}\label{fig:uni_prot:replace_loop}
                    \STATE $\Var{Im} \GETS \{ \Var{primary}[i] \mid 1 \leq i \leq \m \}$.
                    \STATE Choose the replica $\Replica[p] \in (\Replicas \difference (\Var{failed} \union \Var{Im}))$ with smallest $\ID{\Replica[p]}$.\label{fig:uni_prot:choose_prim}
                    \STATE $\Var{primary}[i] \GETS \Replica[p]$.\label{fig:uni_prot:assign}
                    \STATE Inform instance $\Instance{i}$ of its new primary $\Replica[p]$.\label{fig:uni_prot:inform}
                \ENDFOR
            \ENDEVENT
        \end{myprotocol}
    \end{minipage}
    \hfill
    \begin{minipage}{0.37\textwidth}
        \caption{The unified primary replacement protocol running at each non-faulty replica of service $\Service = (\Clients, \Replicas, \Faulty)$.}\label{fig:uni_prot}
    \end{minipage}
\end{figure}

We introduce a \emph{unified primary replacement protocol}, which facilitates coordinated primary replacement among the instances. The protocol requires each non-faulty replica to maintain an internal state $(\Var{failed}, \Var{primary})$, where $\Var{failed} \subseteq \Faulty$ is the set of \emph{known faulty replicas}, and $\Var{primary} : \{ 1, \dots, \m \} \rightarrow (\Replicas \difference \Var{failed})$ is an injective function that maps each instance $\Instance{i}$, $1 \leq i \leq \m$, onto its primary, that is, $\Var{primary}(i) = \Primary{i}$. The function $\Var{primary}$ never maps to known faulty replicas. Figure~\ref{fig:uni_prot} presents the pseudo-code for this protocol. Formally, the unified primary replacement protocol maintains the following invariants:
\begin{invariant}\label{inv:uni_prot}
Let $\Service$ be a service. We write $\Var{failed}_{\rn}(\Replica)$ and $\Var{primary}_{\rn}(\Replica)$ to denote the value of these variables at non-faulty replica $\Replica \in \NonFaulty{\Service}$ at the start of round $\rn$. Further, we use $\Var{primary}_{\rn}(\Replica)[i]$, $1 \leq i \leq \m$, to denote the primary of instance $\Instance{i}$. For every $\Replica, \Replica[q] \in \NonFaulty{\Service}$, the following properties hold at the start of every round $\rn$:
\begin{enumerate}
    \item\label{inv:uni_prot:failed} $\Var{failed}_{\rn}(\Replica) = \{ \Primary{i,j} \mid (i \mapsto \Fail) \in \FailS{\Decs{j}}, 1 \leq j < \rn \}$;
    \item\label{inv:uni_prot:primary} $\Var{primary}_{\rn}(\Replica)$ is an injective function and $\Var{primary}_{\rn}(\Replica)[i] \in (\Replicas \difference \Var{failed})$, $1\leq i \leq \m$;
    \item\label{inv:uni_prot:agree} $\Var{primary}_{\rn}(\Replica) = \Var{primary}_{\rn}(\Replica[q])$ and $\Var{failed}_{\rn}(\Replica) = \Var{failed}_{\rn}(\Replica[q]) \subseteq \Faulty$.
\end{enumerate}
 \end{invariant}

\begin{proposition}\label{prop:unify}
The unified primary replacement protocol of Figure~\ref{fig:uni_prot} maintains Invariant~\ref{inv:uni_prot}.
\end{proposition}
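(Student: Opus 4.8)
The plan is to establish all three properties of Invariant~\ref{inv:uni_prot} simultaneously by induction on the round number $\rn$, leaning throughout on the fact that, by the black-box model of Definition~\ref{def:bbcon}, every non-faulty replica observes the \emph{same} decision set $\Decs{\rn}$---and hence the same $\FailS{\Decs{\rn}}$---in each round. For the base case $\rn = 1$, the initialization on Lines~\ref{fig:uni_prot:init} sets $\Var{failed} = \emptyset$ and $\Var{primary}$ to the map $i \mapsto \Replica_i$ with $\ID{\Replica_i} = i$; since replica identifiers are unique this map is injective with image disjoint from the empty $\Var{failed}$, so Properties~\ref{inv:uni_prot:failed} and~\ref{inv:uni_prot:primary} hold, and because all non-faulty replicas run this identical deterministic initialization, Property~\ref{inv:uni_prot:agree} holds with $\Var{failed}_1 = \emptyset \subseteq \Faulty$.

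For the inductive step I would assume the invariant at the start of round $\rn$ and track the event handler firing on $\FailS{\Decs{\rn}}$ to re-establish it at the start of round $\rn+1$. Property~\ref{inv:uni_prot:failed} is immediate from Line~\ref{fig:uni_prot:failed}: by the induction hypothesis $\Var{primary}_{\rn}(\Replica)[i] = \Primary{i,\rn}$, so the set adjoined is exactly $\{\Primary{i,\rn} \mid (i \mapsto \Fail) \in \FailS{\Decs{\rn}}\}$, and its union with $\Var{failed}_{\rn}(\Replica)$ yields the claimed set ranging over $1 \leq j < \rn+1$. The agreement half of Property~\ref{inv:uni_prot:agree} follows because every line of the handler is a deterministic function of the shared prior state and the shared $\FailS{\Decs{\rn}}$; the containment $\Var{failed}_{\rn+1}(\Replica) \subseteq \Faulty$ follows since each newly adjoined $\Primary{i,\rn}$ must be faulty---Definition~\ref{def:bbcon} guarantees a round succeeds whenever its primary is non-faulty, so a $\Fail$ decision forces $\Primary{i,\rn} \in \Faulty$.

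It remains to handle Property~\ref{inv:uni_prot:primary}. Injectivity I would argue per iteration of the loop on Line~\ref{fig:uni_prot:replace_loop}: the replica $\Replica[p]$ chosen for instance $\Instance{i}$ is drawn from $\Replicas \difference (\Var{failed} \union \Var{Im})$ where $\Var{Im}$ is the \emph{current} image of $\Var{primary}$, so $\Replica[p] \notin \Var{Im}$ differs from every current primary---those of instances not yet processed and those already freshly reassigned---whence overwriting $\Var{primary}[i]$ on Line~\ref{fig:uni_prot:assign} keeps the map injective. The clause $\Replica[p] \notin \Var{failed}$ ensures the new value avoids known-faulty replicas; for a non-failed instance $i'$ the unchanged $\Var{primary}[i']$ also stays outside $\Var{failed}_{\rn+1}$, since injectivity at round $\rn$ keeps it distinct from every primary newly added to $\Var{failed}$.

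The step I expect to be the main obstacle is \emph{well-definedness} of Line~\ref{fig:uni_prot:choose_prim}: the set $\Replicas \difference (\Var{failed} \union \Var{Im})$ must be non-empty at every iteration. Here I would use a counting argument, and the crucial subtlety is an overlap: the primary $\Primary{i,\rn}$ currently being replaced lies in \emph{both} $\Var{failed}$ (it was added on Line~\ref{fig:uni_prot:failed}) and $\Var{Im}$ (it is still the recorded primary of $\Instance{i}$ until Line~\ref{fig:uni_prot:assign}). Hence, since $\Var{Im}$ has exactly $\m$ elements, the forbidden replicas lying outside $\Var{failed}$ number $\abs{\Var{Im} \difference \Var{failed}} \leq \m - 1$, and because $\abs{\Var{failed}} \leq \f$ the available pool satisfies $\abs{(\Replicas \difference \Var{failed}) \difference \Var{Im}} \geq (\n - \f) - (\m - 1) = \n - \f - \m + 1 \geq 1$ under the operating assumption $\m \leq \n - \f$ of this subsection. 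As each iteration drops one not-yet-replaced faulty primary from the image and adds one fresh replica outside $\Var{failed}$, this bound is preserved across all $\abs{\FailS{\Decs{\rn}}}$ iterations, so the choice is always defined; this reduces the proposition to the routine verification just sketched.
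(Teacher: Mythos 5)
Your proof follows essentially the same route as the paper's: induction on the round number, with agreement and injectivity both reduced to the fact that every non-faulty replica deterministically processes the identical set $\FailS{\Decs{\rn}}$ in the same order. The one place where you go beyond the paper's own argument is the well-definedness of Line~\ref{fig:uni_prot:choose_prim}: the paper simply asserts that a fresh primary outside $\Var{failed} \union \Var{Im}$ is chosen, whereas you supply the counting argument showing the candidate set is non-empty, correctly spotting that the primary being replaced is counted in both $\Var{failed}$ and $\Var{Im}$, so that $\abs{\Replicas \difference (\Var{failed} \union \Var{Im})} \geq \n - \f - \m + 1 \geq 1$ under the assumption $\m \leq \n - \f$ stated in Section~\ref{sss:pf_uni_prpl}. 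Your explicit justification that a $\Fail$ decision forces $\Primary{i,\rn} \in \Faulty$ (needed for $\Var{failed} \subseteq \Faulty$ and hence for $\abs{\Var{failed}} \leq \f$) is likewise left implicit in the paper. Both additions are correct and strengthen the argument rather than diverging from it.
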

\begin{proof}
Let $\Service = (\Clients, \Replicas, \Faulty)$ be a service and let $\Replica, \Replica[q] \in \NonFaulty{\Service}$. Initially, due to Line~\ref{fig:uni_prot:init}, we have $\Var{failed}_0(\Replica) = \Var{failed}_0(\Replica[q])  = \emptyset$ and $\Var{primary}_0(\Replica)[i] = \Var{primary}_0(\Replica[q])[i] = \Replica[p]_i$, $1 \leq i \leq \m$,  in which $\ID{\Replica[p]_i} = i$. Hence, Invariant~\ref{inv:uni_prot} initially holds.

Next, we assume that Invariant~\ref{inv:uni_prot} holds at the start of round $\rn$, and we prove that it again holds at the start of round $\rn+1$. At Line~\ref{fig:uni_prot:failed}, the failed primaries in $\FailS{\Decs{\rn}}$ are added to $\Var{failed}$. 
As the underlying consensus protocol ensures that every replica decides on the same set $\FailS{\Decs{\rn}}$, 
all the non-faulty replicas make identical changes to $\Var{failed}$. 
Hence, we are assured that Invariant~\subref{inv:uni_prot}{inv:uni_prot:failed} is maintained. 
The loop at Line~\ref{fig:uni_prot:replace_loop} will replace every newly detected faulty primary by a freshly chosen primary that is not yet in use (not in $\Var{Im}$) and is not a known faulty replica (not in $\Var{failed}$). As all non-faulty replicas agree on $\FailS{\Decs{\rn}}$, they process values in $\FailS{\Decs{\rn}}$ in the same order (Line~\ref{fig:uni_prot:replace_loop}). 
Further, each non-faulty replica choose new primaries deterministically (Line~\ref{fig:uni_prot:choose_prim}) and makes the same changes to $\Var{primary}$. 
Hence, we are assured that Invariants~\subref{inv:uni_prot}{inv:uni_prot:primary} and~\subref{inv:uni_prot}{inv:uni_prot:agree} are maintained. 
Finally, at Line~\ref{fig:uni_prot:inform} each instance $\Instance{i}$ that decided $\Fail$ in round $\rn$ gets assigned a new primary $\Var{primary}[i]$. 
Thus, we conclude that Invariant~\ref{inv:uni_prot} holds.
\end{proof}

For environments in which the set of faulty replicas is ever changing, the unified primary replacement protocol can easily be tweaked 
such that faulty primaries are eventually reconsidered 
(for example, by reintroducing the earliest failed replicas after all the other replicas have failed).

\section{Optimizing parallelization to increase performance}\label{sec:waitfree}

In the previous section, we presented a paradigm for parallelizing consensus protocols. To simplify the presentation, we presented a \emph{step-wise} design whose practical 
implementation would require substantial coordination between instances 
(for example, via the use of locks).  
In practice, this step-wise design incurs a lot of \emph{waiting}, which we illustrate next:

\begin{example}\label{ex:waiting}
Consider service $\Service$ working on a consensus round $\rn$ and let $\Replica \in \NonFaulty{\Service}$ be a non-faulty replica. We describe four cases in which the design presented in Section~\ref{sec:stepwise} induces waiting:
\begin{enumerate}
    \item\label{ex:waiting:execute} The execution of client requests consumes time and forces all the instances to wait until completion.
    \item\label{ex:waiting:var} In a practical setting, there can be a variation in message delivery time, which can lead to instances making consensus decisions at different speeds. For example, a temporary hiccup in the network can cause the instance $\Instance{1}$ to take twice the time it takes the other instances to make a consensus decision in round $\rn$. Hence, all the other instances would have to wait for $\Instance{1}$ to complete with a successful consensus decision.
    \item\label{ex:waiting:throttle} A faulty primary $\Primary{i}$ coordinating instance $\Instance{i}$, $1 \leq i \leq \m$, can actively throttle the speed at which its instance makes a consensus decision, which delays all the other instances. 
    \item\label{ex:waiting:crash} A primary $\Primary{i}$ coordinating instance $\Instance{i}$, $1 \leq i \leq \m$, can crash. Existing consensus protocols~\cite{pbft,zyzzyva} detect such a crash through large timeout values. 
This forces the other instances, whose consensus throughput should only be limited by the network latency, to wait for long idle periods for a primary failure to resolve.
\end{enumerate}
\end{example}

Waiting reduces the attainable performance (given the available resources). 
Fortunately, all the above described forms of waiting can be eliminated from our paradigm. First, in Section~\ref{ss:remove_wait}, we describe how to eliminate waiting. 
This complicates dealing with client requests, and, hence, in Section~\ref{ss:client_wait}, 
we describe these complications and provide solutions to resolve them.

\subsection{Making parallelization wait-free}\label{ss:remove_wait}

To ensure the correctness of our parallelization paradigm, we do not require any instance to wait for the other instances. 
Consider a service working on consensus round $\rn$. 
First, we observe that the \emph{execution of client requests} in round $\rn$ has no  influence on the consensus decisions of future rounds. 
Second, the instances arriving at successful consensus decisions do not require any coordination.
The only required coordination between the instances is the unified primary replacement (Section~\ref{sss:pf_uni_prpl}), 
which is limited to instances with failed primaries. 
Hence, instances that arrived at successful consensus decisions in the current round are free to make consensus decisions for the future rounds, 
while the \emph{execution} of the client requests of previous rounds occurs in the background, and 
the other instances are still making consensus 
decisions for the current round. 
Thus, dealing with Example~\ref{ex:waiting}, Case~\ref{ex:waiting:execute} and Case~\ref{ex:waiting:var}, is straightforward.

To arrive at a fully wait-free design, we must also address the malicious behaviors described in Example~\ref{ex:waiting}, Case~\ref{ex:waiting:throttle} and Case~\ref{ex:waiting:crash},
 and deal with any structural differences in speed of the instances. 
Note that if these behaviors are left unhandled, then they can cause unbounded delays between the acceptance and execution of a client request, which we illustrate next:

\begin{example}\label{ex:execute_delay}
Consider a service with $\m = 2$ instances, where instance $\Instance{1}$ makes a consensus decision every $10\si{\micro\second}$ and $\Instance{2}$ makes a consensus decision every $20\si{\micro\second}$. As $\Instance{2}$ operates slower, 
it determines the speed by which the system can complete a consensus round. 
Consequently, over time $\Instance{1}$ will make consensus decisions of an ever growing set of client requests that are awaiting execution. 
Further, the delay between $\Instance{1}$ accepting a client request and its execution grows without a bound.

Similarly, consider the case in which both  instances make consensus decisions every $10\si{\micro\second}$. Assume that in round $\rn$ the primary $\Primary{2}$ of instance $\Instance{2}$ fails. Further, assume it takes $500\si{\micro\second}$ to detect such a failure and $100\si{\micro\second}$ to replace the primary. In such a case, $\Instance{1}$ would have made $60$ consensus decisions before $\Instance{2}$ resumes normal operation. Hence, after round $\rn$ all the client requests accepted by $\Instance{1}$ will see an additional delay of $600\si{\micro\second}$.
\end{example}

The situations illustrated in Example~\ref{ex:execute_delay} are among the several cases in which the client requests accepted by well-performing instances would see their execution unnecessary delayed due to interference (delays) from other instances. Next, we show how our parallelization paradigm can address these cases using a simple yet effective principle:

\begin{definition}\label{def:soft_failure}
We say that an instance $\Instance{i}$, $1\leq i \leq \m$, suffers a \emph{soft failure} if it is working on a consensus decision in round $\rn$, while some other instance is already working on a consensus decision in round $\rn + \sigma$. We call $\sigma$ the \emph{gap size}, which is determined by the network latency and the timeout used to detect failures.

When an instance $\Instance{i}$, $1 \leq i\leq \m$, suffers any type of failure in round $\rn$, including soft failures, 
then $\Instance{i}$ is excluded from contributing to the next $\varepsilon$ consensus rounds. 
We call $\varepsilon$ the \emph{skip size}, which is determined by the gap size and time to replace the faulty primary. 
Hence, in case of the failure of instance $\Instance{i}$ in round $\rn$, $\Instance{i}$ starts making consensus decisions for consensus round $\rn + \varepsilon$.
\end{definition}

The soft failure principle is based on the assumption that all instances coordinated by the non-faulty primaries should reach successful consensus decisions roughly within the same time (as they all operate in the same environment). Hence, instances that lag by a significant margin $\sigma$ could be led by a faulty primary.\footnote{A similar assumption is the basis of \rBFT{}~\cite{rbft}, see Section~\ref{sec:related}.} 
Each replica locally detects the soft failure of its instance $\Instance{i}$, $1\leq i \leq \m$, and 
uses the fault detection infrastructure of the underlying consensus protocol to work towards ending the ongoing consensus round with a decision $\Fail$. 
The concept of soft failures addresses the execution delays due to underperforming instances, whereas the skipping of consensus rounds allows previously-failed instances to catch up with the other instances.

\begin{theorem}\label{thm:wait_free}
Instances are \emph{wait-free}: instances can make successful consensus decisions without outside interference and, at the same time, the delay between an instance accepting a client request and the replicas executing this client request is upper-bounded.
\end{theorem}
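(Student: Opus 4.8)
The plan is to prove the two conjuncts of the statement separately: first the \emph{structural} claim that a non-faulty instance can keep producing successful consensus decisions without being blocked by any other instance, and then the \emph{quantitative} claim that the delay between accepting and executing a client request is bounded. For the first conjunct I would reuse the three observations established at the start of Section~\ref{ss:remove_wait}: execution of a round is performed in the background and never gates future consensus; instances reaching a successful decision require no inter-instance coordination; and the only coordination point, unified primary replacement, is confined to instances that already decided $\Fail$. Consequently an instance with a non-faulty primary depends on nothing but the network to advance. The remaining worry is that a slow or malicious instance could hold back the others; here I would invoke Definition~\ref{def:soft_failure}, which forces any instance lagging by the gap size $\sigma$ to a $\Fail$ decision rather than being waited upon. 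Thus no instance can impose waiting on a faster one, which establishes wait-freedom.

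For the delay bound I would introduce a progress measure: let the \emph{frontier} be the largest round index any instance is working on at a given time. Assuming at least one non-faulty primary exists (guaranteed by unified primary replacement once $\m \le \n - \f$, and implied whenever $\m > \f$), the corresponding instance makes a successful decision every bounded interval $\Delta$ fixed by the network latency. Hence the frontier is non-decreasing and advances by at least one every $\Delta$, so it reaches round $\rn + \sigma$ within time $\sigma\Delta$ of first reaching round $\rn$. This is the driver that keeps the whole system moving regardless of adversarial instances.

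Next I would bound the lag and the completion of a round. By Definition~\ref{def:soft_failure}, once the frontier sits at round $\rn + \sigma$ every instance still working on round $\rn$ soft-fails; using the fault-detection infrastructure of the black-box protocol of Definition~\ref{def:bbcon}, such an instance emits $\Fail$ for round $\rn$ within a bounded detection-and-completion time $\tau_f$. Therefore every one of the $\m$ decisions in $\Decs{\rn}$, whether of the form $\Succ{\Request}$ or $\Fail$, is fixed by time $\sigma\Delta + \tau_f$ after the frontier first reached round $\rn$. A request accepted in round $\rn$ becomes executable as soon as all of $\Decs{\rn}$ is collected, after which background execution takes bounded time $\tau_e$. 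Chaining the three bounds yields an explicit upper bound of roughly $\sigma\Delta + \tau_f + \tau_e$ on the accept-to-execute delay, which proves the second conjunct.

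The delicate point, and the step I expect to be the main obstacle, is ensuring that a failed or soft-failed instance, after skipping $\varepsilon$ rounds, rejoins \emph{inside} the active window rather than immediately soft-failing again; otherwise an instance could livelock and the lag bound above would collapse. I would argue this from the intended relationship $\varepsilon > \sigma$ together with the primary-replacement time that Definition~\ref{def:soft_failure} builds into $\varepsilon$, so that the restarted instance re-enters at or beyond the current frontier. Pinning down this interplay between $\sigma$, $\varepsilon$, $\Delta$, and $\tau_f$, and confirming that the black-box consensus of Definition~\ref{def:bbcon} indeed delivers the on-demand $\Fail$ decisions that the detection step relies on, is where the real care is required.
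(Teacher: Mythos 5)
The paper states Theorem~\ref{thm:wait_free} without any formal proof, relying entirely on the informal discussion in Section~\ref{ss:remove_wait} (background execution, no coordination among successful instances, coordination confined to failed instances, and the soft-failure/skip mechanism of Definition~\ref{def:soft_failure}); your proposal follows exactly that line of argument and correctly fleshes it out, with the frontier-based quantitative bound $\sigma\Delta + \tau_f + \tau_e$ being a genuine addition beyond what the paper makes explicit. The delicate point you flag---that the skip size $\varepsilon$ must be calibrated against $\sigma$, $\Delta$, and the detection/replacement times so that a recovering instance re-enters at or ahead of the frontier rather than immediately soft-failing again---is real and is left entirely implicit in the paper, so your treatment is, if anything, more careful than the original.
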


Note that if unified primary replacement (Section~\ref{sss:pf_uni_prpl}) is used to deal with primary failures,
then natural fluctuations in the performance of an instance can cause an unjust replacement of its primary. 
Hence, we allow treatment of soft failures as temporarily failures from which primaries can eventually recover: a replica that \emph{fails soft} can be considered non-faulty when the unified primary replacement protocol runs out of replicas it considers non-faulty.

\subsection{Consistent handling of client requests}\label{ss:client_wait}

Consensus protocols facilitate execution of client requests in a consistent manner across all the non-faulty replicas. Usually, consensus protocols can also aim at executing these client requests in the order they were sent by the clients. Maintaining this consistent ordering becomes harder when parallelizing consensus protocols:

\begin{example}
Consider a service $\Service$ with two instances $\Instance{1}$ and $\Instance{2}$ and client $\Client$. If both $\Instance{1}$ and $\Instance{2}$ are about to make a consensus decision for round $\rn$ and $\Client$ sends a request $\Request$ to both instances, then both instances might propose the same request, which would waste resources. If $\Client$ sends requests $\Request_1$ to $\Instance{1}$ and $\Request_2$ to $\Instance{2}$, then the order in which these requests are executed is subject to decisions made by the execution protocol (Section~\ref{ss:det_rex}). Moreover, due to the wait-free design, instances $\Instance{1}$ and $\Instance{2}$ can be in completely different rounds when proposing $\Request_1$ and $\Request_2$, again making the order of execution independent of the order in which $\Client$ requested $\Request_1$ and $\Request_2$.
\end{example}

If consistent ordering of execution of client requests is not necessary, e.g., if client requests operate on conflict-free replicated data types~\cite{crdt}, then clients can simply send their transactions to arbitrary instances. If consistent ordering is necessary, then the straightforward way to guarantee ordering is to assign each client to a unique instance. 

Let $\Service = (\Clients, \Replicas, \Faulty)$ be a service. If we assume $\abs{\Clients} > \m$, then we can assign clients to instances in a round-robin manner by requiring that the instance $\Instance{i}$, $1 \leq i \leq \m$, only deals with client requests of clients $\Client \in \Clients$ with $i = \ID{\Client} \bmod \m$. We notice that a client $\Client$ can be assigned to an instance with a faulty primary that might ignore the client request. In existing consensus protocols, this behavior  eventually leads to the primary being detected as faulty. 
Hence, based on how we deal with the faulty primaries there are two ways to guarantee service for $\Client$.
\begin{enumerate}
\item If faulty primaries are replaced (Section~\ref{sss:pf_uni_prpl}), then unified primary replacement assures that eventually 
a non-faulty primary will coordinate the instance and propose the requests of $\Client$.
\item If faulty primaries are not replaced (Section~\ref{sss:pf_ignore}), then the requests of client $\Client$ could get indefinitely ignored by a faulty primary that never recovers. In such a case, we allow $\Client$ to switch to another instance $\Instance{i}$, $1\leq i \leq \m$. To do so, $\Client$ sends an instance-change request to $\Instance{i}$. If $\Instance{i}$ gets this request, then it adds the change-request to reassign $\Client$ to $\Instance{i}$ to the consensus decision it is going to make in $\rn$ by proposing to all replicas to reassign the client to $\Instance{i}$ in round $\rn + 2\sigma$. After this proposal is requested, instance $\Instance{i}$ is able to propose client requests for $\Client$ after round $\rn + 2\sigma$ is executed. To assure balanced load among instances, a non-faulty instance only has to accept an instance-change request if it does not yet have $\lceil \abs{\Clients} / \abs{\NonFaulty{\Service}} \rceil$ clients assigned.
\end{enumerate} 

\subsection{Overview of wait-free designs}

We have explored several different designs for the parallelization of consensus protocols, each resulting in a valid and highly parallelized consensus protocol. Next, we summarize our findings.

\begin{theorem}\label{thm:main}
The parallelization paradigm we propose can turn a general consensus protocol into a \emph{high-performance parallelized wait-free consensus protocol} in which every client will eventually see its requests executed, the load of non-faulty replicas is evenly distributed, and the impact of faulty replicas is minimized. Additionally, the parallelization paradigm we propose can also turn partial consensus protocols into \emph{high-performance parallelized wait-free consensus protocols}.
\end{theorem}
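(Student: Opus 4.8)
The plan is to treat Theorem~\ref{thm:main} as a \emph{composition} result: it bundles together the safety, liveness, load-balancing, and fault-tolerance guarantees established piecewise in the preceding sections, so the proof mostly consists of identifying each claimed property and discharging it by citing the appropriate earlier result, while verifying that the separate mechanisms compose without interfering. First I would fix a service $\Service = (\Clients, \Replicas, \Faulty)$ with $\n > 3\f$ and an arbitrary well-behaved primary-backup protocol instantiating Definition~\ref{def:bbcon}, and then address the asserted properties in turn.

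For \emph{safety} (that the composed system is a correct consensus protocol, i.e.\ non-divergent) I would combine three facts: the underlying protocol's non-divergence guarantees every non-faulty replica agrees on the per-instance decisions $\Decs{\rn}$, hence on the partition into $\SuccS{\Decs{\rn}}$ and $\FailS{\Decs{\rn}}$; Proposition~\ref{prop:execute} then guarantees all non-faulty replicas execute the accepted requests in the same deterministic order; and Proposition~\ref{prop:unify}, via Invariant~\ref{inv:uni_prot}, guarantees they agree on the primary assignment across rounds, so the instance structure itself stays consistent. \emph{Wait-freedom} together with the bounded acceptance-to-execution delay is exactly Theorem~\ref{thm:wait_free}, so that clause is immediate once the soft-failure mechanism of Definition~\ref{def:soft_failure} is in place.

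For the \emph{liveness} clause---every client eventually sees its request executed---I would invoke the client-assignment analysis of Section~\ref{ss:client_wait}: each client is pinned to instance $\Instance{i}$ with $i = \ID{\Client} \bmod \m$, and then split on the failure-handling policy. If faulty primaries are replaced, unified primary replacement eventually installs a non-faulty primary on $\Instance{i}$, which then proposes the client's request; if they are not replaced, the instance-change mechanism lets the client migrate to a responsive instance within $2\sigma$ rounds. Combined with the bounded delay of Theorem~\ref{thm:wait_free}, the request is executed in bounded time. \emph{Even load distribution} follows from the round-robin assignment, the requirement that all $\m$ primaries sit on distinct replicas, and the cap of $\lceil \abs{\Clients}/\abs{\NonFaulty{\Service}} \rceil$ clients per instance that bounds reassignment. \emph{Minimal impact of faulty replicas} follows because wait-freedom isolates a faulty primary's slowdown to its own instance (soft failures bound the interference to the gap size $\sigma$), while the hash-based order of Proposition~\ref{prop:execute} denies malicious replicas control of the execution order whenever $\m > \f$.

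For the \emph{partial-consensus} extension I would appeal to the footnote accompanying Definition~\ref{def:bbcon}: partial consensus guarantees that a majority of non-faulty replicas decide immediately and that full agreement on each $\Decs{\rn}$ is reached eventually through checkpoint and recovery. Since every argument above depends only on the black-box decision outputs $\Succ{\Request}$ and $\Fail$ and on their \emph{eventual} agreement across non-faulty replicas, the same guarantees carry over once consistency is re-established. The step I expect to be the main obstacle is showing that the mechanisms genuinely compose under the soft-failure treatment: a non-faulty but temporarily slow instance may be flagged as soft-failed and its primary replaced, and one must verify that the recovery allowance---reconsidering soft-failed replicas as non-faulty when replacements are exhausted---prevents the system from starving a client whose assigned instance is honest but merely slow, and that the instance-change protocol terminates despite this churn. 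In other words, the delicate point is arguing that liveness \emph{survives} the interaction between soft failures, primary replacement, and client reassignment, rather than being broken by their interplay.
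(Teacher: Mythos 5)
Your proposal takes essentially the same route as the paper's own proof, which likewise splits into the two policy combinations (no replacement plus client reassignment versus replacement plus static assignment), derives non-divergence from the underlying protocol together with Proposition~\ref{prop:execute}, termination from the underlying protocol, even load from the client-assignment rules, and minimal impact of faulty replicas from Theorem~\ref{thm:wait_free}. If anything your version is more complete: the paper's proof silently omits the partial-consensus clause of the theorem and does not examine the interaction between soft failures, primary replacement, and client reassignment that you correctly identify as the delicate point.
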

\begin{proof}
We consider the following instances of the parallelization paradigm based on how one deals with primary failure and clients:
\begin{enumerate}
\item failed primaries are not replaced and clients can request instance reassignment; and
\item primaries are replaced and clients are assigned statically to instances.
\end{enumerate}

In both cases, \emph{non-divergence} of the parallelized protocol follows from non-divergence of the underlying consensus protocol, which assures that all replicas reach the same consensus decision for each instance in each round, and the deterministic round execution of the \emph{execution protocol} (Proposition~\ref{prop:execute}). Next, we look at \emph{termination}, which follows directly from termination of the underlying consensus protocol.

The equal distribution of load among all replicas follows from the uniform assignment of clients to each instance (when primaries are replaced) and by the upper-bound on the assigned clients when client can request instance reassignment. The impact of faulty replicas is minimized due to the resulting protocol being wait-free (Theorem~\ref{thm:wait_free}).
\end{proof}

\section{Related work}\label{sec:related}

There is an abundant literature on consensus protocols and, in specific, primary-backup consensus protocols (e.g.,~\cite{scaling,untangle,distalgo,distsys,byza,encybd,wild}). 
In this paper, we primarily focus on works that address the limitations of primary-backup protocols, as described in the Introduction. Several different approaches towards resolving some of these limitations have been considered in the literature.

\paragraph{Leader-free protocols.} Several leader-free protocols have been proposed to eliminate any issues arising from the discrepancy in responsibilities between primaries and backups, especially with respect to malicious primary behavior~\cite{honey,leaderfree}. In these leader-free designs, all replicas have the same responsibilities, the same load, and have the same impact on the system performance. Unfortunately, these leader-free designs come at high communication costs, making their practical usage limited. Recently, HoneyBadgerBFT proposed a leader-free design based on expensive asynchronous broadcast protocols and reduce the amortized per-request communication cost by making the batch size a function of the communication complexity of the protocol. We believe that simpler and more efficient primary-backup designs are more suitable for high-performance applications, and we view fine-tuning their performance (e.g., via parallelization as explored in this paper, or by applying amortized optimizations such as explored in HoneyBadgerBFT) as a more promising avenue for further development.

The Proof-of-Work (PoW) protocol and other similar protocols employed by cryptocurrencies~\cite{bitcoin,ethereum,encybd,untangle} are also leader-less. These protocols can be employed in permissionless environments in which participants can join and leave at any time~\cite{bitp2p}. Unfortunately, for many practical applications the computational costs of PoW are too high and the throughput too low~\cite{badcoin,badbadcoin,hypereal}. As permissioned \BFT{}-style protocols outperform PoW by several orders of magnitudes, this rules out their usage in the permissioned setting we study in this paper.

\paragraph{Redistribution of tasks.}  Recently, several complex consensus protocols have been proposed that use cryptographic techniques to reduce the communication costs of \BFT{}-style consensus~\cite{linbft,sbft}, which especially adds burden on the primary. In LinBFT, this is partially addressed by deferring some of the primary tasks to other replicas. Unfortunately, this complicates the design significantly, as the protocol not only has to detect and replace faulty primaries, but also detect and compensate for deferred faulty replicas. Moreover such deferral techniques are highly protocol-specific and only address issues related to the \emph{primary load}, not to the costs of \emph{primary replacement} or other \emph{malicious behavior} by the primary. Other designs, such as FastBFT~\cite{fastbft}, uses tree-based overlay networks and efficient message aggregation to reduce the total communication cost of the protocol. The usage of overlay networks is orthogonal to our approach and can reduce communication of the consensus protocol on which our paradigm relies.

\paragraph{Reducing malicious behavior.} Several works have observed that traditional \BFT{}-style consensus protocols only address a narrow set of malicious behavior, namely behavior that prevents any progress~\cite{rbft,aardvark,prime,spin}. Hence, several designs have been proposed to also address behavior that impedes performance without completely preventing progress. One such design is \rBFT{}, which uses parallelization not to improve performance---as we propose---but only to detect malicious behavior. In practice, the design of \rBFT{} results in poor performance at high costs. Another design is Spinning~\cite{spin}, which proposes to replace the primary every round. This would not incur the costs of \rBFT{}, while still reducing the impact of faulty replicas to severely reduce throughput. In our parallelization paradigm, however, we provide \emph{wait-free} consensus to instances with non-faulty primaries, which allows those instances to always process client requests at maximum throughput.

\section{Conclusions and future work}\label{sec:conclude}

In this paper, we propose a novel paradigm for parallelizing consensus protocols in a wait-free manner, 
thereby improving the system throughput by reducing the load on individual replicas and sharply reducing the impact of faulty replicas. Our techniques are protocol-agnostic, adjustable to several settings, and can be combined with many readily available primary-backup consensus protocols. 
Hence, our paradigm opens the door for the development of new and highly performant permissioned blockchain applications.

\bibliographystyle{plainurl}
\bibliography{sources}
\appendix

\end{document}